\newtheorem{theorem}{Theorem}[section]
\newtheorem{remark}{Remark}[section]
\newtheorem{lemma}{Lemma}[section]
\title{Co-Design of Watermarking and Robust Control \\ for Security in Cyber-Physical Systems}
\author{Raman Goyal, Christoforos Somarakis, Erfaun Noorani and Shantanu Rane % <-this % stops a space
\thanks{R. Goyal, C. Somarakis, E. Noorani, and S. Rane are with Palo Alto Research Center - A Xerox Company, Palo Alto, CA, USA. E. Noorani is with the Department of Electrical and Computer Engineering at the University of Maryland, College Park, MD, USA. 
{ \tt\small \{rgoyal,somarakis,enoorani,srane\}@parc.com, \tt\small \{enoorani\}@umd.edu} }}
\begin{document}

\maketitle

\begin{abstract}
This work discusses a novel framework for simultaneous synthesis of optimal watermarking signal and robust controllers in cyber-physical systems to minimize the loss in performance due to added watermarking signal and to maximize the detection rate of the attack. A general dynamic controller is designed to improve system performance with respect to the $\mathcal H_2$ norm, while a watermarking signal is added to improve security performance concerning the detection rate of replay attacks. The attack model considered in the paper is a replay attack, a natural attack mode when the dynamics of the system is unknown to the attacker. 
% We explain the manner in which these design tasks are fundamentally antagonistic. 
% Furthermore, we provide design improvement on both robustness and security fronts by breaking down the problem into iterative convex optimization problems. 
% Our focus is on linear time-invariant plant dynamics with both noise and watermarking signals of Gaussian nature. 
The paper first generalizes the existing result on the detection rate of $\chi^2$ detector from a static-LQR controller to a general dynamic controller. The design improvements on both robustness and security fronts are obtained by iteratively solving the convex subsets of the formulated non-convex problem in terms of the controller and watermarking signal. A semi-definite programming optimization is formulated using Linear Matrix Inequality (LMI) results to solve the larger system-level design optimization problem. We highlight the effectiveness of our method over a simplified three-tank chemical system. 
\end{abstract}

\section{Introduction}
Cyber-Physical Systems (CPS) and Internet of Things (IoT) devices have become an essential part of today's critical infrastructures, financial markets, governments, as well as our daily lives. The inherent vulnerability of these technologies to unauthorized use, digital theft, or other malicious attacks declared the area of Cyber-Security of vital importance in our modern way of living. This importance has been magnified by the proliferation of notorious malicious actors and high-profile Cyber-Attacks, especially in industrial control systems \cite{cornelius21}. The security of these systems has attracted the interest of several research communities with the activity that continues into the third decade \cite{patton2000issues,10.1145/3203245,VENKATASUBRAMANIAN2003293}. In the control community, the efforts have focused on fast and efficient methods of detection and isolation of misbehaviors \cite{sharma2010sensor,cardenas2008secure}. One particular problem of great interest is the design of watermarking signals in detecting malicious attacks \cite{rubio2017use}. This is a method of data authentication that infuses nominal data with untraceable and unrepeatable signals, typically noise, and has proven to be very effective against system attacks with the aid of observation replays. In these attacks, the attacker records the output signal for some amount of time and then replays it while attacking the system e.g. recording and replaying security videos \cite{SurveyReplay}. This way the system user would remain oblivious to ongoing attacks, without discrepancy between the replay signals and an appropriately designed watermark that triggers a detection mechanism. 

However, watermarking is typically fed into the plant as extra noise, degrading the performance of the system. Within a standard system dynamics framework, it can be shown that the more effective the watermarking, the stronger the performance degradation. The obvious potential consequences on the actual (typically nonlinear) dynamics, a degradation of the linearized system can cast upon, motivates our interest to look into the problem of the simultaneous design of watermarking architectures and robust controllers against replay attacks. Similar ideas of simultaneous design of controller and sensor/actuator placement have been considered in the literature to get the optimal desired performance. Most of the existing literature focuses on the Linear-Quadratic-Gaussian (LQG) controllers. We treat the LQG problem as a special case of the larger class of $\mathcal H_2$ problems \cite{bosgra2001design} and consider a synthesis of $\mathcal H_2$ robust controller for an LTI system to improve system volatility to random disturbances while at the same time we infuse a watermarking signal of similar random nature. It appears that these two system components are fundamentally competitive, while a straightforward coupling for optimal synthesis yields a non-convex optimization problem. To this end, we propose a design method, that although sub-optimal, provides a significant improvement on the overall robustness and detection rate security performance. For the $\mathcal H_2 $ control synthesis problem works most relevant to this paper are \cite{ROTEA_H2_1993,liu2006computational}. The literature on the design of watermarking signals for system security, of relevance to this paper, is vast. Due to space limitation, we refer the interested reader to  \cite{hashemi2020gain,liu2018Johansson,satchidanandan2016dynamic,ozel2017physical}, and cited works therein. 

% The rest of the paper is organized as follows: 
The contributions and the organization of the paper can be laid out as follows:
Section~\S \ref{section: formulation} formulates the problem, details the attack model, and discusses the watermarking signal characteristics. 
Section~\S \ref{section: design} elaborates the system design approach by considering the design of the most general dynamic controller as a convex function of the watermarking signal. We further synthesize the Kalman filter in the LMI framework as a convex function of the watermarking signal and finally, generalize the existing result on the detection rate of $\chi^2$ detector with a general dynamic controller which was restricted to LQR static feedback controller \cite{mo2009secure}. 
Section~\S \ref{section: optimal} formulates the big non-convex optimization problem and then provides a solution by breaking down the problem into small convex subsets, iterations over which result in a locally optimal solution to the original non-convex problem.
Sections~\S \ref{sec:sim} and \S \ref{section: conclusion} give a detailed numerical example and final concluding remarks along with the future work. 

%This is a very common and natural attack for an attacker who does not know the dynamics of the system but is aware of the fact that the system itself is expected to be in a steady state for the duration of the attack \cite{mo2009secure,khazraei2017new}. 

% Cyber-Physical System has recently gained a lot of interest in the research community. 

%\textcolor{red}{It can be shown that the LQR problem is a special case of the H2 control problem, so our results in watermarking H2 control will generalize the results in the literature (LQR) to a larger class of problems and it is not orthogonal to the existing research.}
%
%Karl's paper.

%watermarking approach for replay attack \cite{mo2013detecting}

%watermarking approach  \cite{ozel2017physical,weerakkody2017bernoulli,satchidanandan2016dynamic,satchidanandan2019design,liu2018Johansson}

%convex optimization and sensor placement \cite{Joshi2009}

% Integrating structure and control design to achieve mixed ${H}2/{H}\infty$ performance \cite{khargonekar1991mixed,Lu2000}

%${H}2$ performance 

%IASD \cite{ACC_2019_IASD,Integrating2021MSSP}
%
%Information architecture \cite{IA_Paper,Saraf2017}
%LMI Book \cite{Skelton_LMI,Boyd_LMI}
%Main LMI paper\cite{scherer2006lmi,Scherer_LMI_1997}

\section{Problem Formulation}\label{section: formulation}

\subsection{System Description}
Let us consider a discrete-time linear time-invariant (LTI) system described by the following state-space representation:
\begin{align}
x_{k+1} &= A x_k +B u_k + D w_k,\label{state_eqn2} \\
y_k &=C x_k + v_k,  \label{output_eqn2}
% \\ 
% z &= C_zx+D_sw_s, \hspace{0.5 pc} \text{(measurement)} \label{meas_eqn2}
\end{align}
where $x_k \in \mathbb{R}^{n_x}$ is the state of the system at time $k$, $u_k \in \mathbb{R}^{n_u}$ is the control vector at time $k$. The initial state vector $x_0$ and the process noise at time $k$, $w_k$, are assumed to be independent random variables. In particular, $w_k \sim \mathcal{N}(\mathbf {0},W)$, $\forall k$, with $W\in \mathbb R^{n_w\times n_w}$ to be known and fixed covariance matrix. 
The output of the system $y_k \in \mathbb{R}^{n_y}$, is measured by a sensor network with noise modeled as another independent Gaussian random variable $v_k \sim \mathcal{N}(\mathbf {0},V)$, $\forall k$ with $V\in \mathbb R^{n_v\times n_v}$ to be known and fixed covariance matrix. The goal here is to design a control law to achieve the desired system performance. 
% \begin{align}
%     J = \lim_{T \to \infty} \mathbb{E}\frac{1}{T}\left[\sum_{k=0}^{T-1} x_k^T {Q}_k x_k + u_k^T {R}_k u_k \right]. \label{e:cost}
% \end{align}

\subsection{Attack Model}
This section models the malicious activity that can be performed by the attacker to harm the control system described in the previous section, see figure~\ref{fig:system}. An attack model known as the replay attack is considered where the attacker has the following capabilities:

1) The attacker has access to and can record all the real-time sensor measurements for some time.

2) The attacker can replay the previously recorded data $y'_k = y_k \Longleftrightarrow y_{k-\tau}$ while attacking with a sequence of malicious control input $u^a_k \in \mathbb R^{n_{u^a}}$. 
% \begin{align}
% y'_k = y_k \Longleftrightarrow y_{k-\tau}
% \end{align}
% Under the above attack model, the system dynamics changes to the following form:
% \begin{align}
% x_{k+1} &= A x_k +B u_k +B^a u^a_k  + D w_k,\label{state_eqn} \\
% y_k &=C x_k + D^a y^a_k +  v_k.\label{output_eqn}
% \end{align}

\subsection{Optimal Watermarking Signal}
The main idea of a physical watermark is to inject a random noise (watermark signal) through some channels to excite the system and check whether the system responds to the watermark signal in accordance with the system dynamics. This section describes a novel watermarking scheme where the intensity and the channels through which watermarking signals are injected are optimized with the simultaneous design of the controller to achieve the desired system performance. 
In the literature, the watermarking signals are added to the existing actual control signal which is already designed for some desired system performance. This work simultaneously designs the control law $u_k$ and watermarking signal $\Delta u_k$. The state dynamics of the system in the presence of watermarking signal can be written as:
\begin{align}
x_{k+1} &= A x_k +B u_k +B \Delta u_k  + D w_k,\label{eq:waterSys}\\
y_k &=C x_k + v_k,
\end{align}
which assumes that the watermarking signal enters through the control input channels. Let us model the watermarking signal $\Delta u_k$ as IID Gaussian random variable with zero mean and covariance $U>O$, i.e. $\Delta u_k \sim \mathcal{N}(\mathbf{0},U)$, $\forall k$. Let us also define the inverse of covariance of the watermarking signal as:
\begin{gather}
\Gamma \triangleq U^{-1},
\end{gather}
% and the vector $\gamma$ as:
% \begin{gather}
% \Gamma \triangleq \text{diag}(\gamma),
% \end{gather}
% due to the IID nature of the watermarking signal.
and now, one of the goals of this research work is to find the optimization variable $\Gamma$ denoting the strength of the watermarking signal which can also specify the channels through which the signal should be added.

% As defined in \cite{IA_Paper}, we associate a price to each actuator/sensor that is inversely proportional to the noise intensity associated with that instrument. In this work, prices associated with the structure parameters are also considered. Therefore, the total design price can be expressed as:
% \begin{equation}
% \$ = p_a^T\gamma_a+p_s^T\gamma_s+p_{\alpha}^T\alpha,
% \end{equation}
% where $p_a$, $p_s$, and $p_{\alpha}$ are vectors containing the price per unit of actuator precision, sensor precision, and price per unit of structure parameter, respectively. 

\section{The system design architecture}\label{section: design}
\begin{figure}[h]
    \centering
    \includegraphics[width=8cm,height=6cm,keepaspectratio]{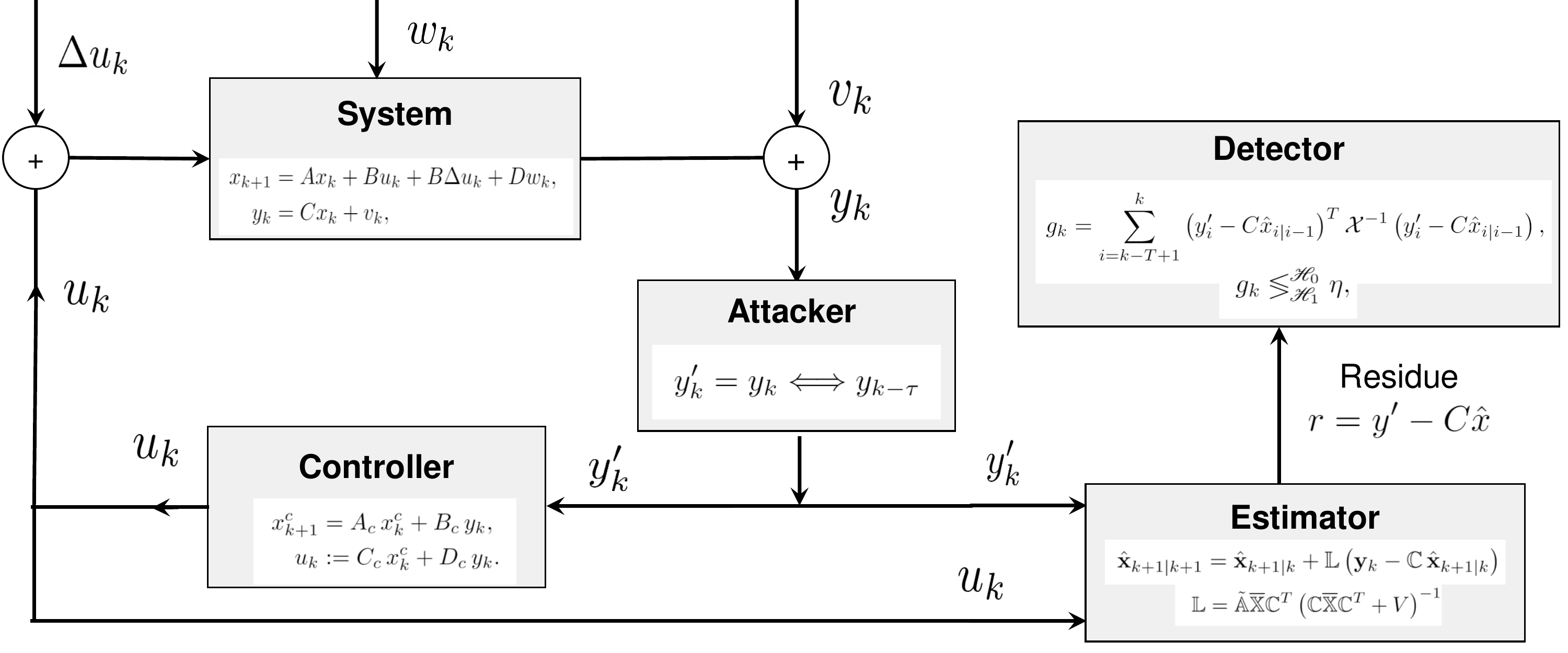} 
    \caption{Control System Architecture}
    \label{fig:system}
\end{figure}
This section details the system design architecture with a dynamic robust feedback controller and a separate estimator/detector to achieve the robust control performance and detect the attack on the system. 
The control objective is to design a dynamic compensator of the form~\eqref{e:Cntrl_dyn1}-\eqref{e:Cntrl_dyn2} below and simultaneously find the optimum watermarking signal to obtain a given level of detection in the system and also bound the $\mathcal{H}_2$ norm of the system output, i.e., $\|y\|_{\mathcal{H}_2} \leq \bar{y}_{\mathcal{H}_2}$ or bound the covariance of output, i.e., $\mathbb{E}_{\infty}(yy^T) < \bar{Y} $ for given  $\bar{Y}$ and $\bar{y}_{\mathcal{H}_2}$.

\subsection{Controller Design for Robust Performance}
Let us assume the dynamic controller of the form:
\begin{align}
x^{c}_{k+1} &= A_c\, x^{c}_{k} +B_c\, y_k, \label{e:Cntrl_dyn1}\\
u_k &:= C_c\, x^{c}_k + D_c\, y_k. \label{e:Cntrl_dyn2}
\end{align}
A standard result in the literature requires the direct feedforward term in the dynamic controller to be zero for the bounded control input covariance, thus, we will assume $D_c = O$ for the rest of the analysis\footnote{Notation ``$O$" (``$I$") is used for zero (unit) matrix of appropriate dimensions.}. Using the above compensator, the closed-loop system dynamics can be written using the augmented state vector $\mathbf{x}^T := \begin{bmatrix} x^T & {x^c}^T  \end{bmatrix}$ with augmented process noise $\mathbf{w}^T := \begin{bmatrix}  \Delta u^T & w^T & v^T \end{bmatrix}$ as:
\begin{align}
    \mathbf{x}_{k+1} &= \mathbb A \, \mathbf x_k +\mathbb {B}\, \mathbf{w}_{k}, \label{eq:closedA}\\
\mathbf{y}_{k} &= \mathbb C\, \mathbf{x}_{k} + \mathbb D\, \mathbf{w}_{k},\label{eq:closedC}
\end{align}
where 
\begin{align}
    \mathbb A &= \begin{bmatrix} A  & BC_c \\ B_c C & A_c \end{bmatrix}, & 
    \mathbb B &= \begin{bmatrix} B & D & O\\ O & O & B_c \end{bmatrix}, \\
    \mathbb C &= \begin{bmatrix} C & O \end{bmatrix},  & 
    \mathbb D & = \begin{bmatrix} O & O & I \end{bmatrix},
\end{align}
and $\mathbf w_{k}\sim \mathcal N(\mathbf 0, \mathbb W )$, where $\mathbb W:=\text{Diag}\{U,W,V\}$. It is a standard result that the above closed loop system is stable and a steady-state state covariance matrix ($\mathbb X > 0 $) exists, if:
\begin{equation} 
\mathbb A\mathbb X\mathbb A^T+\mathbb B\mathbb W\mathbb B^T < \mathbb X, \label{stable_ineq21}
\end{equation}
which again can be written as:
\begin{equation} 
\mathbb X - \mathbb A\mathbb X\mathbb X^{-1} \mathbb X^T\mathbb A^T - \mathbb B\mathbb W\mathbb B^T > O. \label{stable_ineq22}
\end{equation}

Applying Schur's complement on (\ref{stable_ineq22}) gives: 
\begin{equation}
\begin{bmatrix} 
\mathbb X & \mathbb A\mathbb X & \mathbb B \\
(\cdot)^{T}& \mathbb X & O \\
(\cdot)^{T} & (\cdot)^{T} & \mathbb W^{-1} \end{bmatrix} > O, \label{stable_ineq23}  
\end{equation}
where $(\cdot)^{T}$ represents the corresponding transpose of the symmetric block and $\mathbb W$ is the covariance matrix corresponding to $\mathbb W$, the inverse of which is:
\begin{align}
    \mathbb W^{-1} = \begin{bmatrix} \Gamma & O & O \\ O & W^{-1} & O \\ O & O & V^{-1} \end{bmatrix}.
\end{align}
\noindent It is straightforward to show that the output covariance can be bounded as:
\begin{equation}
\mathbb C\mathbb X \mathbb C^T +\mathbb D\mathbb W\mathbb D^T < \bar{Y},
\end{equation}
which can be written as:
\begin{equation}
    \begin{bmatrix}
   \bar{Y} & \mathbb C \mathbb X & \mathbb D \\
(\cdot)^{T} & \mathbb X& O \\
(\cdot)^{T} & (\cdot)^{T} & \mathbb W^{-1} 
    \end{bmatrix} > O .\label{e:out_bound}
\end{equation}
\noindent Notice that the constraint in Eqn.~(\ref{stable_ineq23}) is not an LMI. We need to perform congruence transformation and change of variables to convert them to LMIs \cite{Scherer_LMI_1997,Integrating2021MSSP}. Let us define and partition the matrix as:
\begin{equation*}
    \mathbb X \triangleq  \begin{bmatrix}
X & P^{T} \\
P & \hat{X}
\end{bmatrix} , \quad \mathbb X^{-1} \triangleq \begin{bmatrix}
Y & S \\
S^{T} & \hat{Y}\end{bmatrix},
\end{equation*}
and the transformation matrix $\mathbb{T}$ 
\begin{equation*}
    \mathbb{T} \triangleq \begin{bmatrix}
    I & Y \\ O & S^T 
    \end{bmatrix}
    \end{equation*}
    and associated congruence transformation $\{\mathcal T, \tilde{\mathcal T}\}$
    \begin{equation*}
    \mathcal T \triangleq \begin{bmatrix}
    \mathbb{T}  & O & O \\
O & \mathbb{T} & O \\
O & O & I
    \end{bmatrix}, \tilde{\mathcal T} \triangleq \begin{bmatrix}
    I & O & O \\
O & \mathbb{T} & O \\
O & O & I
    \end{bmatrix},
\end{equation*}
\noindent respectively. Then applying $\{\mathcal T, \tilde{\mathcal T}\}$ to Eqn.~(\ref{stable_ineq23}) and Eqn.~(\ref{e:out_bound}), we obtain:
\begin{equation}\label{eq: transformation1}
  \mathcal T^T  \begin{bmatrix}
    \mathbb X & \mathbb A \mathbb X & \mathbb B \\
(\cdot)^{T} & \mathbb X& O \\
(\cdot)^{T} & (\cdot)^{T} & \mathbb W^{-1}
    \end{bmatrix} \mathcal T > O,
\end{equation}
\begin{equation} \label{eq: transformation2}
    \tilde{\mathcal T}^T \begin{bmatrix}
   \bar{Y} & \mathbb C \mathbb X & \mathbb D \\
(\cdot)^{T} & \mathbb X& O \\
(\cdot)^{T} & (\cdot)^{T} & \mathbb W^{-1}
    \end{bmatrix} \tilde{\mathcal T} > O .
\end{equation} Expansion of \eqref{eq: transformation1} and \eqref{eq: transformation2} under an appropriate change of variables leads to a set of LMIs that do not depend on $S$ or $P$. Once the $X, Y$ are obtained, matrices $S$ and $P$ need to be constructed using:
%
% which further can be written as:
% \begin{equation}
% \begin{bmatrix}
% X & I & A X+B L & A & B & D & \mathbf{0}  \\
% (\cdot)^{T} & Y & Q & Y A+F C & Y B & Y D & F  \\
% (\cdot)^{T} & (\cdot)^{T} & X & I & \mathbf{0} & \mathbf{0} & \mathbf{0} \\
% (\cdot)^{T} & (\cdot)^{T} & (\cdot)^{T} & Y & \mathbf{0} & \mathbf{0} & \mathbf{0} \\
% (\cdot)^{T} & (\cdot)^{T} & (\cdot)^{T} & (\cdot)^{T} & \Gamma & \mathbf{0} & \mathbf{0} \\
% (\cdot)^{T} & (\cdot)^{T} & (\cdot)^{T} & (\cdot)^{T} & (\cdot)^{T} & W^{-1} & \mathbf{0}\\
% (\cdot)^{T} & (\cdot)^{T} & (\cdot)^{T} & (\cdot)^{T} & (\cdot)^{T} & (\cdot)^{T} &V^{-1}
% \end{bmatrix}>\mathbf{0},
% \end{equation}
% \begin{equation}
% \begin{bmatrix}
% \bar{Y}  & C X &  C & \mathbb D \\
% (\cdot)^{T} & X & I & \mathbf{0}  \\
% (\cdot)^{T} & (\cdot)^{T} & Y & \mathbf{0} \\
% (\cdot)^{T} & (\cdot)^{T} & (\cdot)^{T} &  \mathbb W^{-1}
% \end{bmatrix}>\mathbf{0},
% \end{equation}
%where the following change-of-variables were used that can be written in compact form as:
%Notice that the above LMIs 
\begin{equation}
Y X+S P=I.
\end{equation}
\noindent Notice that when the controller has the same order as the plant, $S$ and $P$ are square and non-singular matrices, in which case the controller gain matrices can be calculated as:
\begin{equation}
\begin{split}
\begin{bmatrix}
A_{c} & B_{c} \\
C_{c} & D_{c}
 \end{bmatrix} = \begin{bmatrix}
S^{-1} & -S^{-1} Y B \\
O & I
 \end{bmatrix} & \cdot \begin{bmatrix}
Q-Y A X & F \\
L & O
 \end{bmatrix} \cdot \\ & \cdot \begin{bmatrix}
P^{-1} & O \\
-C X P^{-1} & I
 \end{bmatrix} .
\end{split} \label{e:RLFQ}
\end{equation}
% to apply along with the expansion of Eqns.~\eqref{eq: transformation1}-\eqref{eq: transformation2}. 
Also, notice that the change of variables operator is affine in $L, F$, and $Q$. 

Another constraint to consider for a feasible controller is to have a finite bou2nd on the control input covariance:
\begin{align}
\mathbb{E}_{\infty}(u_k u_k^T) < \bar{U}.
\end{align}
which uses the earlier mentioned condition:  $D_c = O$.
% \begin{equation}
% \begin{bmatrix}
% A_{c} & B_{c} \\
% C_{c} & D_{c}
%  \end{bmatrix} =\begin{bmatrix}
% S^{-1} & -S^{-1} Y B \\
% 0 & I
%  \end{bmatrix} \begin{bmatrix}
% Q-Y A X & F \\
% L & R
%  \end{bmatrix} \begin{bmatrix}
% U^{-1} & 0 \\
% -M X U^{-1} & I
%  \end{bmatrix} 
% \end{equation}

% \begin{align}
% \mathbf{T}^{T} \mathbf{A}_{c l} \mathbf{X} \mathbf{T} &=\begin{bmatrix}
% A X+B L & A+B R M \\
% Q & Y A+F M
%  \end{bmatrix},  \label{e:LMI_close1}\\
% \mathbf{T}^{T} \mathbf{B}_{c l} &=\begin{bmatrix}
% D+B R D_{z} \\
% Y D+F D_{z}
%  \end{bmatrix}, \label{e:LMI_close2}\\
% \mathbf{T}^{T} \mathbf{X} \mathbf{T} &=\begin{bmatrix}
% X & I \\
% I & Y
%  \end{bmatrix}, \label{e:LMI_close3}
% \end{align}
% where the following change-of-variables were used:
% \begin{align}
% R&=D_{c} \\
% L&=C_{c} U+D_{c} M X \\
% F&=S B_{c}+Y B D_{c} \\
% Q&=S A_{c} U+Y A X+S B_{c} M X+Y B C_{c} U+Y B D_{c} M X
% \end{align}

% \begin{remark}
%  The proof is excluded as it follows very closely with Theorem \ref{dyn_comp_thm}. 
% \end{remark}

% %%%%%%%%%%%%%%%%%%%%%%%%%%%%%%%%%%%%%%%%%%%%%%%%%%%%%%%%%%%%%%%%%%%%%%%%%%%%%%%%%%%%%%%%%%%%%%%%%%%%%%%%%%%%%%%%%%%%%%%%%%%%%%%%%%%%%%%%%%%%%%%%%%%%%%%%%%%%%%%%%

\subsection{Estimator Design for Fault Detection}
We will use the standard Kalman filter to estimate the system states and the widely known $\chi^{2}$ detector for fault detection in the closed-loop control system \cite{mehra1971innovations}. The Kalman filter estimate, $\hat{\mathbf{x}} = \begin{bmatrix}
\hat{x} \\\hat{x}^{c} 
\end{bmatrix}$,
for the LTI system defined in Eqn.~(\ref{state_eqn2}) can be written as:
\begin{align}
    \hat{\mathbf x}_{k+1 \mid k+1} =\hat{\mathbf x}_{k+1 \mid k}+\mathbb L \left(\mathbf y_{k+1}
    -\mathbb C\,\hat{\mathbf{x}}_{{k+1 \mid k}}\right),
\end{align}
where $\mathbb L$ is the Kalman gain given as:
\begin{align}
\mathbb L =     \tilde{\mathbb A} \overline{\mathbb X}\mathbb C^T \left(\mathbb C \overline{\mathbb X}\mathbb C^T+ V \right)^{-1}, \label{eq:KalmanG}
\end{align}
where $\overline{\mathbb X}$ is the steady-state state covariance matrix that satisfies
% and the measurement model for the Kalman filter which has access to both outputs $y$, and controller states $x_c$, is:
% \begin{align}
%     %  y^e_k = \underbrace{\begin{bmatrix}
%     % C & \mathbf{0} \\ \mathbf{0} & I
%     % \end{bmatrix}}_{\mathbb C} \begin{bmatrix}
%     % x_k \\ x_{c_k}
%     % \end{bmatrix}+\begin{bmatrix}
%     % v_k \\ \mathbf{0}
%     % \end{bmatrix}.
%      y^e_k = \underbrace{\begin{bmatrix}
%     C & \mathbf{0} 
%     \end{bmatrix}}_{\mathbb C} \begin{bmatrix}
%     x_k \\ x_{c_k}
%     \end{bmatrix}+\begin{bmatrix}
%     v_k 
%     \end{bmatrix}.
% \end{align}
\begin{equation*}
\begin{split}
\overline{\mathbb X} = \tilde{\mathbb{A}} \overline{\mathbb X} \tilde{ \mathbb A}^T &+ \mathbb B\mathbb W \mathbb{B}^T - \mathbb K\, V \,\mathbb K^T - \\
&-\tilde{\mathbb A}\overline{\mathbb X}\mathbb C^T \left(\mathbb C \overline{\mathbb X}\mathbb C^T+V\right)^{-1}\mathbb C\overline{\mathbb X}\tilde{\mathbb A}^T,
\end{split}
\end{equation*}
where $\tilde{\mathbb A}=\mathbb A-\mathbb K \mathbb C$, $V=\mathbb D \mathbb W \mathbb D^T$, $ \mathbb K= \mathbb S (V)^{-1}  $, and $\mathbb S = \mathbb B \mathbb W  \mathbb D^T$. After substituting for the system matrices (Eqns.~(\ref{eq:closedA})-(\ref{eq:closedC})),  we get:% $V^e= V, ~ \mathbb K =  \begin{bmatrix}
%O \\ B_c
%\end{bmatrix}, ~ S = \begin{bmatrix}
%O \\ B_c V
%\end{bmatrix}$ and $\tilde{\mathbb A}= \begin{bmatrix} A  & BC_c \\ O & A_c \end{bmatrix},$ which gives:
\begin{equation}
     \overline{\mathbb X} = \tilde{\mathbb A} \overline{\mathbb X}\tilde{\mathbb A}^T + \mathbb B\tilde{\mathbb W}\mathbb B^T -\tilde{\mathbb A}\overline{\mathbb X}\mathbb C^T \left(\mathbb C \overline{\mathbb X}\mathbb C^T+V\right)^{-1}\mathbb C\overline{\mathbb X}\tilde{\mathbb A}^T, \label{eq:Xecl}
\end{equation}
where
% \begin{align}
$    \tilde{\mathbb W} = \begin{bmatrix} U & O & O \\ O & W & O \\ O & O & \epsilon \end{bmatrix}, $
% \end{align}
and $\epsilon>0$ a small enough number, added to make $\tilde{\mathbb W}$ positive definite. Now, using the definition of gain matrix $\mathbb L$ from Eqn.~\ref{eq:KalmanG}, Eqn.~\ref{eq:Xecl} can be written as:
\begin{equation*}
    \nonumber \overline{\mathbb X} = (\tilde{\mathbb A}-\mathbb L \mathbb C)  \overline{\mathbb X}(\tilde{\mathbb A}-\mathbb L \mathbb C)^T + \mathbb B\tilde{\mathbb W}\mathbb B^T + \mathbb L V \mathbb L^T.
\end{equation*} With the aim of minimizing $\operatorname{trace}(\overline{\mathbb X})$, we use Comparison Lemma \cite{Scherer_LMI_1997} to write:
\begin{equation*}
     \overline{\mathbb X} - (\tilde{\mathbb A}-\mathbb L \mathbb C)  \overline{\mathbb X}(\tilde{\mathbb A}-\mathbb L \mathbb C)^T  - \mathbb LV\mathbb L^T - \mathbb B\tilde{\mathbb W}\mathbb B^T > O,
\end{equation*}
which can be written using Schur's complement as:
\begin{align}
    \begin{bmatrix}
    \overline{\mathbb X} & \tilde{\mathbb A}-\mathbb L \mathbb C & \mathbb L & \mathbb B \\
    (\cdot)^T & \overline{\mathbb X}^{-1} &O&O\\
    (\cdot)^T & (\cdot)^T & V^{-1} &O \\
    (\cdot)^T & (\cdot)^T & (\cdot)^T & \tilde{\mathbb W}^{-1} 
    \end{bmatrix} > O.
\end{align}

Let us define $\overline{\mathbb P}^{-1}  = \overline{\mathbb X}$ and multiply both sides by matrix $[\operatorname{blkdiag}(\overline{\mathbb P},I,I,I)]$ to apply congruence transformation:
\begin{align}
   \begin{bmatrix}
    \overline{\mathbb P} & \overline{\mathbb P}\tilde{\mathbb A}-\overline{\mathbb P}\mathbb L \mathbb C & \overline{\mathbb P}\mathbb L & \overline{\mathbb P}\mathbb B \\
    (\cdot)^T & \overline{\mathbb P} &O&O\\
    (\cdot)^T & (\cdot)^T & V^{-1} &O \\
    (\cdot)^T & (\cdot)^T & (\cdot)^T & \tilde{\mathbb W}^{-1} 
    \end{bmatrix} > O,
\end{align}

Now defining $\overline{\mathbb Y} = \overline{\mathbb P}\mathbb L$, we obtain the following LMI in $\overline{\mathbb P},\overline{\mathbb Y}$ and $\Gamma$ as:
\begin{align}
   \begin{bmatrix}
    \overline{\mathbb P} & \overline{\mathbb P}\tilde{\mathbb A}-\overline{\mathbb Y} \mathbb C & \overline{\mathbb Y} & \overline{\mathbb P}\mathbb B \\
    (\cdot)^T & \overline{\mathbb P} &O&O\\
    (\cdot)^T & (\cdot)^T & V^{-1} &O \\
    (\cdot)^T & (\cdot)^T & (\cdot)^T & \tilde{\mathbb W}^{-1}
    \end{bmatrix} > O.\label{eq:PL}
\end{align}
The above equation allows us to write the synthesis of the Kalman filter in the LMI framework as a convex function of optimization variable $\Gamma$ which denotes the strength of the watermarking signal.

\subsection{Fault Detection using $\chi^{2}$ Detector}
% The $\chi^{2}$ detector 
Let us partition the steady-state state covariance matrix $\overline{\mathbb X}$ as:
\begin{equation}\label{eq: mathbbX}
    \overline{\mathbb X} \triangleq  \begin{bmatrix}
X^e & P^{e^T} \\
P^e & \hat{X}^e
\end{bmatrix}.
\end{equation} 
\noindent The $\chi^{2}$ detector for fault detection is based on the following lemma \cite{mehra1971innovations}:
\begin{lemma}\label{lem: chi2} The residues with the Kalman filter estimate $y_{k}-C \hat{x}_{k \mid k-1}$ follows i.i.d. Gaussian distribution with $\mathbf{0}$ mean and covariance
$$\mathcal{X}=C X^e C^{T}+V,$$
% with $\mathcal{I} = [I ~~\mathbf{0}]$,
where $X^e$ is the block sub-matrix of $\mathbb X$ from Eq.\eqref{eq: mathbbX}.
\end{lemma}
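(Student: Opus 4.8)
The statement is the classical innovations property of the steady-state Kalman filter (as in \cite{mehra1971innovations}), specialized to the augmented closed-loop plant \eqref{eq:closedA}--\eqref{eq:closedC}; the plan is to reduce the plant-level residue to an augmented-system innovation, invoke the standard argument, and then read off the block structure. Two preliminary identities do most of the reduction: since $\mathbb C = \begin{bmatrix} C & O\end{bmatrix}$ and the first block of $\hat{\mathbf x}$ is $\hat x$, we have $\mathbb C\,\hat{\mathbf x}_{k\mid k-1} = C\,\hat x_{k\mid k-1}$; and $\mathbf y_k = \mathbb C\mathbf x_k + \mathbb D\mathbf w_k = Cx_k + v_k = y_k$. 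Hence $y_k - C\hat x_{k\mid k-1}$ is exactly the augmented innovation $\mathbf e_k := \mathbf y_k - \mathbb C\,\hat{\mathbf x}_{k\mid k-1}$, and $\overline{\mathbb X}$ from \eqref{eq:Xecl} is, by construction, the steady-state one-step prediction error covariance of this filter.

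Next, let $\tilde{\mathbf x}_{k\mid k-1} := \mathbf x_k - \hat{\mathbf x}_{k\mid k-1}$. Substituting the filter recursion together with \eqref{eq:closedA}--\eqref{eq:closedC} shows that $\tilde{\mathbf x}_{k\mid k-1}$ obeys a stable linear recursion driven only by $\{\mathbf w_j\}_{j\le k-1}$ (and $\mathbf x_0$); in steady state it is therefore an a.s.\ convergent linear combination of the Gaussian noises, hence Gaussian, with zero mean by unbiasedness of the filter and covariance $\overline{\mathbb X}$. Writing $\mathbf e_k = \mathbb C\,\tilde{\mathbf x}_{k\mid k-1} + \mathbb D\mathbf w_k$ and using that $\mathbf w_k$ is independent of $\tilde{\mathbf x}_{k\mid k-1}$, which depends on the noise only up to time $k-1$, the cross terms drop and $\mathbf e_k$ is Gaussian, zero mean, with steady-state covariance $\mathbb C\,\overline{\mathbb X}\,\mathbb C^T + \mathbb D\mathbb W\mathbb D^T$. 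Whiteness follows from the orthogonality principle: $\hat{\mathbf x}_{k\mid k-1}$ is the optimal (projection) predictor onto $\mathrm{span}\{\mathbf y_0,\dots,\mathbf y_{k-1}\}$, so $\mathbf e_k$ is orthogonal to that span, while every past residue $\mathbf e_j$ with $j<k$ is a linear function of $\{\mathbf y_0,\dots,\mathbf y_j\}\subseteq\{\mathbf y_0,\dots,\mathbf y_{k-1}\}$; thus $\mathbb E[\mathbf e_k\mathbf e_j^T]=O$ for $j\ne k$, and jointly Gaussian uncorrelated vectors are independent, giving the i.i.d.\ claim in steady state.

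Finally I substitute the block forms. With $\overline{\mathbb X} = \begin{bmatrix} X^e & P^{e^T} \\ P^e & \hat X^e\end{bmatrix}$ from \eqref{eq: mathbbX} and $\mathbb C = \begin{bmatrix} C & O\end{bmatrix}$ we get $\mathbb C\,\overline{\mathbb X}\,\mathbb C^T = CX^eC^T$, while $\mathbb D\mathbb W\mathbb D^T = \begin{bmatrix} O & O & I\end{bmatrix}\,\text{Diag}\{U,W,V\}\,\begin{bmatrix} O & O & I\end{bmatrix}^T = V$, so $\mathcal X = CX^eC^T + V$. The point that needs care is that in the augmented model the process and measurement noises are correlated — the sensor noise $v_k$ enters both $\mathbf y_k$ and, through $B_c$, the controller state, so $\mathbb S = \mathbb B\mathbb W\mathbb D^T \neq O$ — hence the textbook uncorrelated-noise innovations derivation does not apply verbatim and one must work with the gain-shifted form $\tilde{\mathbb A} = \mathbb A - \mathbb K\mathbb C$, $\mathbb K = \mathbb S V^{-1}$, which is precisely why \eqref{eq:Xecl} is written that way; one should also keep $\overline{\mathbb X}$ (the prediction error covariance) carefully distinct from the closed-loop state covariance $\mathbb X$ of \eqref{stable_ineq21}.
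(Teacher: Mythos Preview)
The paper does not actually prove Lemma~\ref{lem: chi2}: it is stated as a known result and attributed to \cite{mehra1971innovations}, with no derivation given. Your proposal therefore supplies a proof where the paper provides none, and it is correct. The reduction $y_k - C\hat x_{k\mid k-1} = \mathbf y_k - \mathbb C\,\hat{\mathbf x}_{k\mid k-1}$ via the block structure of $\mathbb C$ and $\hat{\mathbf x}$ is exactly the right bridge to the standard innovations argument, the covariance and whiteness computations are sound, and the final block substitution recovering $\mathcal X = CX^eC^T + V$ is clean. Your closing remark is also well placed: the correlated-noise subtlety ($\mathbb S = \mathbb B\mathbb W\mathbb D^T \neq O$, because $v_k$ feeds both $\mathbf y_k$ and the controller state through $B_c$) affects the Riccati equation for $\overline{\mathbb X}$ and the filter gain, not the innovation-covariance formula itself once the correct $\overline{\mathbb X}$ is in hand --- and indeed the paper's \eqref{eq:Xecl} is already written in the gain-shifted $\tilde{\mathbb A}$ form that accommodates this. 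The caution to distinguish the prediction-error covariance $\overline{\mathbb X}$ from the closed-loop state covariance $\mathbb X$ of \eqref{stable_ineq21} is likewise apt, since the paper's notation invites confusion on this point.
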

\noindent In view of the above result, the $\chi^{2}$ detector takes the following form at time $k$ :
% \begin{align*}
% &g_{k}=\sum_{i=k-T+1}^{k}\left(y_{i}-C \hat{x}_{i \mid i-1}\right)^{T} \mathcal{X}^{-1}\left(y_{i}-C \hat{x}_{i \mid i-1}\right),\\ 
% \end{align*} 
\begin{align*}
&g_{k}=\sum_{i=k-T+1}^{k}\left(y'_{i}-C \hat{x}_{i \mid i-1}\right)^{T} \mathcal{X}^{-1}\left(y'_{i}-C \hat{x}_{i \mid i-1}\right),\\ 
\end{align*} 
subject to hypothesis testing 
\begin{equation*}
    g_{k} \lessgtr_{\mathscr{H}_{1}}^{\mathscr{H}_0} \eta,
\end{equation*} 
where $T$ is the window size of detection and $\eta$ is the threshold that is related to the false alarm rate. When the system is under normal operation, the left of the above equation is $\chi^{2}$ distributed with $m T$ degrees of freedom. Furthermore, $\mathscr{H}_{0}$ denotes the system is under normal operation while $\mathscr{H}_{1}$ denotes a triggered alarm. Here, define the probability of false alarm $\alpha_{k}$ and the probability of detection rate $\beta_{k}$ as:
\begin{align}
\alpha_{k} \triangleq \mathbb{P}\left(g_{k}>\eta \mid \mathscr{H}_{0}\right), ~~ \beta_{k} \triangleq \mathbb{P}\left(g_{k}>\eta \mid \mathscr{H}_{1}\right).
\end{align}

%\textcolor{red}{We now wish to consider the $\chi^{2}$ detector after adding the random control signal. The following theorem shows the effectiveness of the detector under the modified control scheme.}
% \begin{theorem}
% In the absence of an attack,
% \begin{align}
% \sum_{i=k-T+1}^{k} E\left[\left(y_{k}-C \hat{x}_{k \mid k-1}\right)^{T} \mathcal{X}^{-1}\left(y_{k}-C \hat{x}_{k \mid k-1}\right)\right]=m .
% \end{align}
% Under attack
% \begin{align}
% \nonumber  &\lim _{k \rightarrow \infty} E\left[\left(y_{k}^{\prime}-C \hat{x}_{k \mid k-1}\right)^{T} \mathcal{X}^{-1}\left(y_{k}^{\prime}-C \hat{x}_{k \mid k-1}\right)\right] \\
% &=m+2 \operatorname{trace}\left(C^{T} \mathcal{X}^{-1} C \mathcal{U}\right).
% \end{align}
% \end{theorem}
The following result highlights the impact of detector $g_k$ during attack and non-attack events, in the presence of a watermarking signal.
\begin{theorem}\label{thm}
For a linear time-invariant system described using Eqn.~(\ref{eq:waterSys}) and controlled with a general dynamic controller given by Eqns.~(\ref{e:Cntrl_dyn1}) and (\ref{e:Cntrl_dyn2}), the expectation of $\chi^{2}$ detector $g_k$, in the presence of an attack, converges as:
\begin{equation}
     \lim_{k\rightarrow \infty } \mathbb E[g_k] =m T+2 \operatorname{trace}\left(C^{T} \mathcal{X}^{-1} C \mathcal{U}\right)T,
\label{e:Attack}
\end{equation}
where $\mathcal{U}$ is the solution of the following Lyapunov equation:
\begin{align}
\mathcal{U}-B U B^{T}=\mathcal{A} \mathcal{U} \mathcal{A}^{T}, \label{eq:calU}
\end{align}
where $\mathcal{A} \triangleq A(I-L C)$, $\mathcal{B} \triangleq B C_c  A_c$. The expectation of $\chi^{2}$ detector, in the absence of an attack is given as:
\begin{equation}
\lim_{k\rightarrow \infty } \mathbb E[g_k] = m T.
\label{e:Noattack}
\end{equation}
\end{theorem}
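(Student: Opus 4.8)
The plan is to treat the two limits separately; the no-attack identity is immediate, and the attack case reduces to a discrepancy analysis between the running estimator and a time-shifted copy of the estimator that produced the recorded data. For \eqref{e:Noattack}: under $\mathscr H_0$ the filter is fed the genuine measurements, so by Lemma \ref{lem: chi2} the residues $z_i = y_i - C\hat x_{i\mid i-1}$ are, in steady state, i.i.d. $\mathcal N(\mathbf 0,\mathcal X)$, whence $\mathbb E[z_i^{T}\mathcal X^{-1} z_i] = \operatorname{trace}(\mathcal X^{-1}\mathbb E[z_i z_i^{T}]) = \operatorname{trace}(I_m) = m$; summing the $T$ summands of $g_k$ and letting $k\to\infty$ (so the Kalman transient has died out) gives $mT$.

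For \eqref{e:Attack} I would first write the closed-loop recursions under the replay $y'_k = y_{k-\tau}$, tracking the true plant/controller state, the controller state driven by the replayed output, and the Kalman filter state, and using that the filter's prediction step incorporates the \emph{current} watermark realization $\Delta u_k$ (the defender knows the watermark it injects), whereas the replayed measurement $y_{k-\tau}$ embeds the \emph{recorded} realization, call it $\delta_{k-\tau}$ — two independent $\mathcal N(\mathbf 0,U)$ vectors. Comparing the attack-time plant-state estimate $\hat x^{a}_{k\mid k-1}$ with the estimate $\hat x_{k-\tau\mid k-\tau-1}$ of the filter that produced the recording, and using that both the controller update and the filter's measurement correction see the identical measurement $y_{k-\tau}$, the discrepancy $\Delta_k := \hat x^{a}_{k\mid k-1} - \hat x_{k-\tau\mid k-\tau-1}$ obeys a linear recursion whose relevant block has transition matrix $\mathcal A = A(I-LC)$, with the coupling from the (asymptotically vanishing) controller-state estimate discrepancy carried by $\mathcal B = BC_cA_c$, and forcing $B(\Delta u_k - \delta_{k-\tau})$. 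Since $\mathcal A$ is Schur — exactly the stability of the estimation-error dynamics built into the design — the coupling terms decay and $\operatorname{Cov}(\Delta_k)\to 2\mathcal U$, with $\mathcal U$ the unique solution of \eqref{eq:calU} (the factor $2$ coming from $\operatorname{Cov}(\Delta u_k - \delta_{k-\tau}) = 2U$).

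It then remains to push this into the detector. Writing the attack residue as $r'_k = y_{k-\tau} - C\hat x^{a}_{k\mid k-1} = (y_{k-\tau} - C\hat x_{k-\tau\mid k-\tau-1}) - C\Delta_k = z^{\mathrm{rec}}_{k-\tau} - C\Delta_k$, where $z^{\mathrm{rec}}_{k-\tau}$ is the recording's Kalman innovation, I would expand $\mathbb E[(r'_k)^{T}\mathcal X^{-1} r'_k]$ into three pieces: $\mathbb E[(z^{\mathrm{rec}}_{k-\tau})^{T}\mathcal X^{-1} z^{\mathrm{rec}}_{k-\tau}] = m$ by Lemma \ref{lem: chi2}; $\mathbb E[(C\Delta_k)^{T}\mathcal X^{-1}(C\Delta_k)]\to \operatorname{trace}(C^{T}\mathcal X^{-1}C\,(2\mathcal U))$; and the cross term $-2\mathbb E[(z^{\mathrm{rec}}_{k-\tau})^{T}\mathcal X^{-1}C\Delta_k]$, which vanishes because — the filter's prediction having factored out the known watermark — $z^{\mathrm{rec}}_{k-\tau}$ is a function only of the recording's process and measurement noises $w,v$, whereas $\Delta_k$ is a function only of watermark realizations (the fresh ones, independent of the recording, and the recorded ones), and these two noise families are mutually independent. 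Summing over the $T$-length detection window and taking $k\to\infty$ produces $mT + 2\operatorname{trace}(C^{T}\mathcal X^{-1}C\mathcal U)T$, establishing \eqref{e:Attack}.

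The step I expect to require the real work is the discrepancy recursion for the \emph{dynamic} controller. In the static-feedback case of \cite{mo2009secure} there is no controller state to propagate; here one must carry the controller-state estimate through the augmented Kalman filter, isolate from the augmented discrepancy dynamics the block whose transition is $A(I-LC)$, identify the coupling block $\mathcal B = BC_cA_c$ feeding it, and show that this coupling is a decaying transient so that the steady-state covariance of $\Delta_k$ is governed by $\mathcal A$ alone — this is also where Schur-stability of $\mathcal A$ (equivalently, of the designed filter) is invoked to guarantee that \eqref{eq:calU} has a solution. Everything else — the no-attack identity, the decomposition $r'_k = z^{\mathrm{rec}}_{k-\tau} - C\Delta_k$, and the vanishing of the cross term — is routine once this recursion is in hand.
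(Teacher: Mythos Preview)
Your approach is essentially the paper's: introduce a time-shifted (``virtual'') copy of the system/filter that generated the replayed data, form the estimator discrepancy $\Delta_k$, show it is driven by $\mathcal A$ with a transient coupling through $\mathcal B$ and forced by $B(\Delta u_k-\Delta u'_k)$, then decompose the attack residue as the recording's innovation minus $C\Delta_k$ and use independence to kill the cross term.

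One correction worth flagging: the decay of the controller-state discrepancy is governed by $A_c$, not by $\mathcal A$. Under replay both controllers are fed the same measurement $y'_k$, so $x^c_{k}-x'^{,c}_{k}=A_c^{k}(x^c_0-x'^{,c}_0)$ evolves autonomously, and the coupling term $\sum_{i=1}^{k}\mathcal A^{k-i}\mathcal B A_c^{i-1}\kappa$ vanishes only if $A_c$ is Schur in addition to $\mathcal A$. The paper invokes stability of both $\mathcal A$ and $A_c$ explicitly at this step; apart from that, your outline and the paper's proof coincide.
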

\begin{proof}
Please see the appendix for the proof.
\end{proof}

\begin{remark}
Please notice in the proof that there were some traces of residual dynamics due to the dynamic controller which vanished asymptotically, to give the same steady-state expression of the detection rate as given in \cite{mo2009secure}. Hence, although the approach and the expressions turned out identical, this new result generalizes it to dynamic controllers.  
\end{remark}

Finally, the main idea behind the detector is that the replay attack deviations will avert the detection rate from converging to the false alarm rate due to the statistical independence of watermarking signals as given in Eqn.~(\ref{e:Noattack}).

%\textcolor{red}{The difference in the expectation of $g_{k}$ illustrates that the detection rate will not converge to the false alarm rate, which will also be shown in the next section. }

\section{Optimal Watermarking Signal Design}\label{section: optimal}
This section details the formulation for the simultaneous design of watermarking and controller as a big non-convex optimization problem, the sub-optimal solution for which is found by iterating over two convexified subproblems. To better compare and present the results, the following subsections formulate three different optimization problems for maximizing the detection rate and minimizing the performance loss.

\subsection{Detection Rate by adding Random Watermarking Signal for Fixed $\mathcal{H}_2$ Controller}
This subsection details the calculation of the detection rate for an already designed closed-loop system with a $\mathcal{H}_2$ controller for a fixed but randomly chosen watermarking signal.
We first solve a $\mathcal{H}_2$ controller in the presence of watermarking signal of strength $U_{R} = 0.1 I$ as:
$$  \left.
    \begin{array}{ll}
        \textit{minimize} & \operatorname{trace}(\bar{Y})  \\
        \textit{s.t.} & \text{Eqns}.~\eqref{eq: transformation1}-\eqref{eq: transformation2}~ \text{(LMIs)}
    \end{array}
\right \} ~~~~~ \text{Prob. A-I}$$
\noindent making it a convex optimization problem for the optimization variables $A_c, B_c$, and $C_c$ (controller matrices) (ref.~Eqn.~\eqref{e:RLFQ}).
Then, a Kalman filter is designed for the closed-loop $\mathcal{H}_2$ system with the same watermarking signal by:
$$  \left.
    \begin{array}{ll}
        \textit{maximize} & \operatorname{trace}(\mathbb P)  \\
        \textit{s.t.} & \text{Eqn}.~\eqref{eq:PL} ~\text{(LMI)}
    \end{array}
\right \}  ~~~~~~~~~ \text{Prob. A-II}$$

Finally, the performance loss in the closed-loop $\mathcal{H}_2$ system - $\|y\|_{\mathcal{H}_{2_{R}}}$ and the expectation of $\chi^{2}$ detector for fixed time window $T$ - $\operatorname{trace}\left(C^{T} \mathcal{X}^{-1} C \mathcal{U}\right)$ is calculated for the randomly chosen watermarking signal.

\subsection{Maximizing Detection Rate by adding Optimal Watermarking Signal for Fixed $\mathcal{H}_2$ Controller}
In this subsection, we design the optimal watermarking signal to maximize the detection rate for the fixed $\mathcal{H}_2$ controller that was used in the previous subsection while maintaining the same performance loss in the closed-loop system. So the optimal design of watermarking signal is to solve for the optimization variable $U/\Gamma$ to: 
$$  \left.
    \begin{array}{ll}
        \textit{maximize} & \operatorname{trace}\left(C^{T} \mathcal{X}^{-1} C \mathcal{U}\right)  \\
        \textit{s.t.} & \text{Eqns}.~\eqref{eq: transformation1}, \eqref{eq: transformation2}, \eqref{eq:PL} ~\text{(LMIs)}\\
        \textit{s.t.} & \|y\|_{\mathcal{H}_{2_{R}}}>\operatorname{trace}(\bar{Y}) ~\text{(LMI)}\\
        \textit{s.t.} & \text{Eqn}.~\eqref{eq:calU} ~\text{(Non-convex)}
        \end{array}
\right \}  \text{Prob. B}$$

Notice that Eqn.~\eqref{eq:calU} has a linear relation between $\mathcal{U}$ and $U$ and gives a unique solution for $\mathcal{U}$ for fixed Kalman Gain $\mathbb{L}$. To convexify the problem, we take the constraint corresponding to Eqn.~\eqref{eq:calU} out of the optimization and solve for $\mathcal{U}$ separately afterward as it gives a unique solution for fixed Kalman Gain $\mathbb{L}$. Also, observe that the nonlinear maximization in detection rate $\mathcal{X}^{-1}$ can also be written in the linear form of $\bar{\mathbb{P}}$, i.e. convexifying it, following the definition $\overline{\mathbb X}^{-1}  = \overline{\mathbb P}$ given in Section \S \ref{section: design}.

\subsection{Maximizing Detection Rate and Performance Loss by Co-Optimal Watermarking Signal and $\mathcal{H}_2$ Controller}
The joint design of watermarking signal and $\mathcal{H}_2$ controller should result in lesser closed-loop system performance loss and a higher detection rate. This would require simultaneously solving for the watermarking signal and the controller matrices $\{U/\Gamma, A_c, B_c, C_c\}$, as the optimization variables:
$$  \left.
    \begin{array}{ll}
        \textit{maximize} & \operatorname{trace}\left(C^{T} \mathcal{X}^{-1} C \mathcal{U}\right)  \\
        \textit{s.t.} & \text{Eqns}.~\eqref{eq: transformation1}, \eqref{eq: transformation2} ~\text{(LMIs)}\\
        \textit{s.t.} & \|y\|_{\mathcal{H}_{2_{R}}}>\operatorname{trace}(\bar{Y}) ~\text{(LMI)}\\
        \textit{s.t.} & \text{Eqns}.~\eqref{eq:calU},\eqref{eq:PL} ~\text{(Non-convex)}
        \end{array}
\right \}  \text{Prob. C}$$

The above mentioned problem is nonconvex in $U/\Gamma$ and $A_c, B_c, C_c$ through Eqns.~\eqref{eq:calU}, \eqref{eq:PL}. We solve this nonconvex optimization problem by iterating over two convex subsets of the problem.
We first solve for the $A_c, B_c, C_c$ with an initial randomly chosen $U/\Gamma$ as:
$$  \left.
    \begin{array}{ll}
        \textit{minimize} & \operatorname{trace}\left(\bar{Y}\right)  \\
        \textit{s.t.} & \text{Eqns}.~\eqref{eq: transformation1}, \eqref{eq: transformation2} ~\text{(LMIs)}\\
        \textit{s.t.} & \|y\|_{\mathcal{H}_{2_{R}}}>\operatorname{trace}(\bar{Y}) ~\text{(LMI)}
        \end{array}
\right \}  \text{Prob. C-I}$$
and then, we fix the controller matrices $A_c, B_c, C_c$ to solve for the watermarking signal $U/\Gamma$ as:
$$  \left.
    \begin{array}{ll}
        \textit{maximize} & \operatorname{trace}\left(C^{T} \mathcal{X}^{-1} C \mathcal{U}\right)  \\
        \textit{s.t.} & \text{Eqns}.~\eqref{stable_ineq23}, \eqref{e:out_bound} ~\text{(LMIs)}\\
        \textit{s.t.} & \|y\|_{\mathcal{H}_{2_{R}}}>\operatorname{trace}(\bar{Y}) ~\text{(LMI)}\\
        \textit{s.t.} & \text{Eqns}.~\eqref{eq:PL} ~\text{(LMI)}
        \end{array}
\right \}  \text{Prob. C-II}$$
% where $\mathcal{X}^{-1}$ is written in linear form of $\bar{\mathbb{P}}$ making it a convex function 
and the two convex optimization problems are iterated until converged to find a stationary point of the optimization. The same approximation is used for Eqn.~\eqref{eq:calU} as mentioned in the previous subsection.

% The detection rate:
% \begin{align}
% \nonumber g_{k} &=\sum_{i=k-T+1}^{k}\left[\left(y_{i}^{\prime}-C \hat{x}_{k \mid k-1}^{\prime}\right)^{T} \mathcal{X}^{-1}\left(y_{i}^{\prime}-C \hat{x}_{k \mid k-1}^{\prime}\right)\right.\\
% &\left.+2\left(y_{i}^{\prime}-C \hat{x}_{k \mid k-1}^{\prime}\right)^{T} \mathcal{X}^{-1} C \mathcal{A}^{i} \zeta+\zeta^{T}\left(\mathcal{A}^{i}\right)^{T} C^{T} \mathcal{X}^{-1} C \mathcal{A}^{i} \zeta\right].
% \end{align}

\vspace{-1mm}
\section{Numerical Simulation} \label{sec:sim}
To illustrate the proposed concept of simultaneous design of feedback controller and optimal watermarking signal for the security of cyber-physical systems, we consider a chemical process shown in Figure~\ref{fig: tank} linearized along the lines of \cite{milovsevic2019estimating}. The three states $x=(x_1,x_2,x_3)^T$ describe the level of water in tanks 2 and 3 and the temperature of water in tank 2. The control inputs are two flow pumps, one valve, and one heater, as illustrated in Figure \ref{fig: tank}. 
The process noise and measurement noise is assumed to be zero-mean Gaussian with covariance $W$$=$$10^{-3}\,I$ and $V$$=$$10^{-3}\,I$, respectively. 
The linearized dynamics of this three-tank system can be written as  \cite{milovsevic2019estimating}:
$$A\hspace{-0.05in}  = \hspace{-0.05in} \begin{bmatrix}
0.96 & 0 & 0 \\
0.04 & 0.97 & 0 \\
-0.04 & 0 & 0.9
\end{bmatrix},
B \hspace{-0.05in} = \hspace{-0.05in} \begin{bmatrix}
8.8 & -2.3 & 0 & 0\\
0.2 & 2.2 & 4.9 & 0 \\
-0.21 & -2.2 & 1.9 & 21
\end{bmatrix}$$
and $D=I$. We assume all the states are available for measurement with noisy sensors, i.e., $C=I$.

\begin{figure}[ht!]
    \centering
    \includegraphics[scale=0.6]{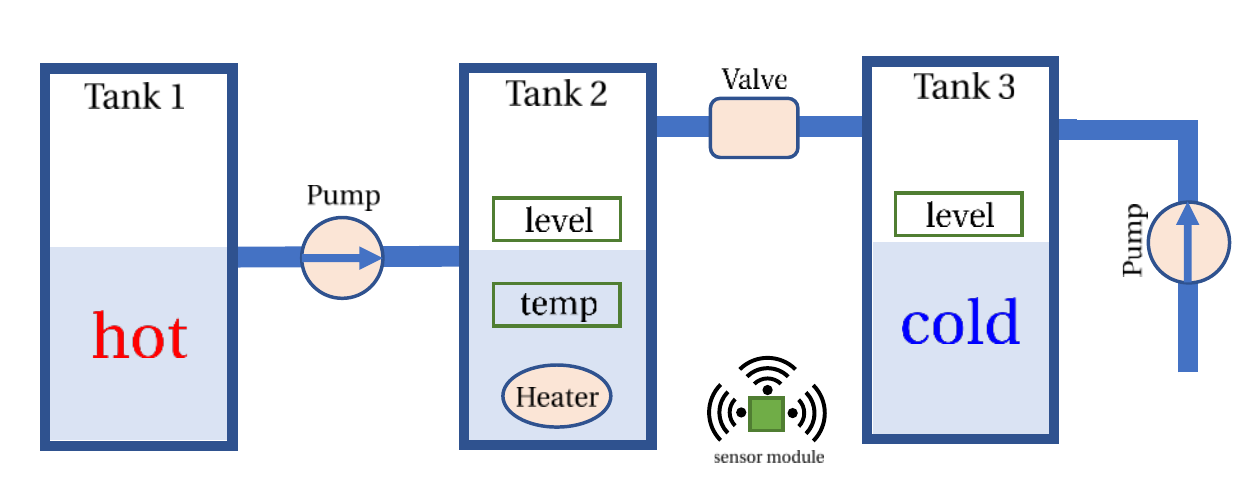}
    \caption{\small{The chemical process with four actuators (hot, cold pumps, valve and heater) that control the level and temperature of tank 2 and level of tank 3.}}
    \label{fig: tank}
\end{figure}

Figure~\ref{f:H20} shows the detection rate for the replay attack with attack starting at 11th step (replying the first 10 steps at time step 11) for the fixed $\mathcal{H}_2$ controller designed to minimize the $\|y\|_{\mathcal{H}_2}$ norm as mentioned in \S IV.A for different watermarking signal strengths. Notice that there is no detection till the 10th time step and then the detection rate increases with increased noise of the watermarking signal.
    \vspace{-2mm}
\begin{figure}[h!]
    \centering
    \includegraphics[width=0.4\textwidth]{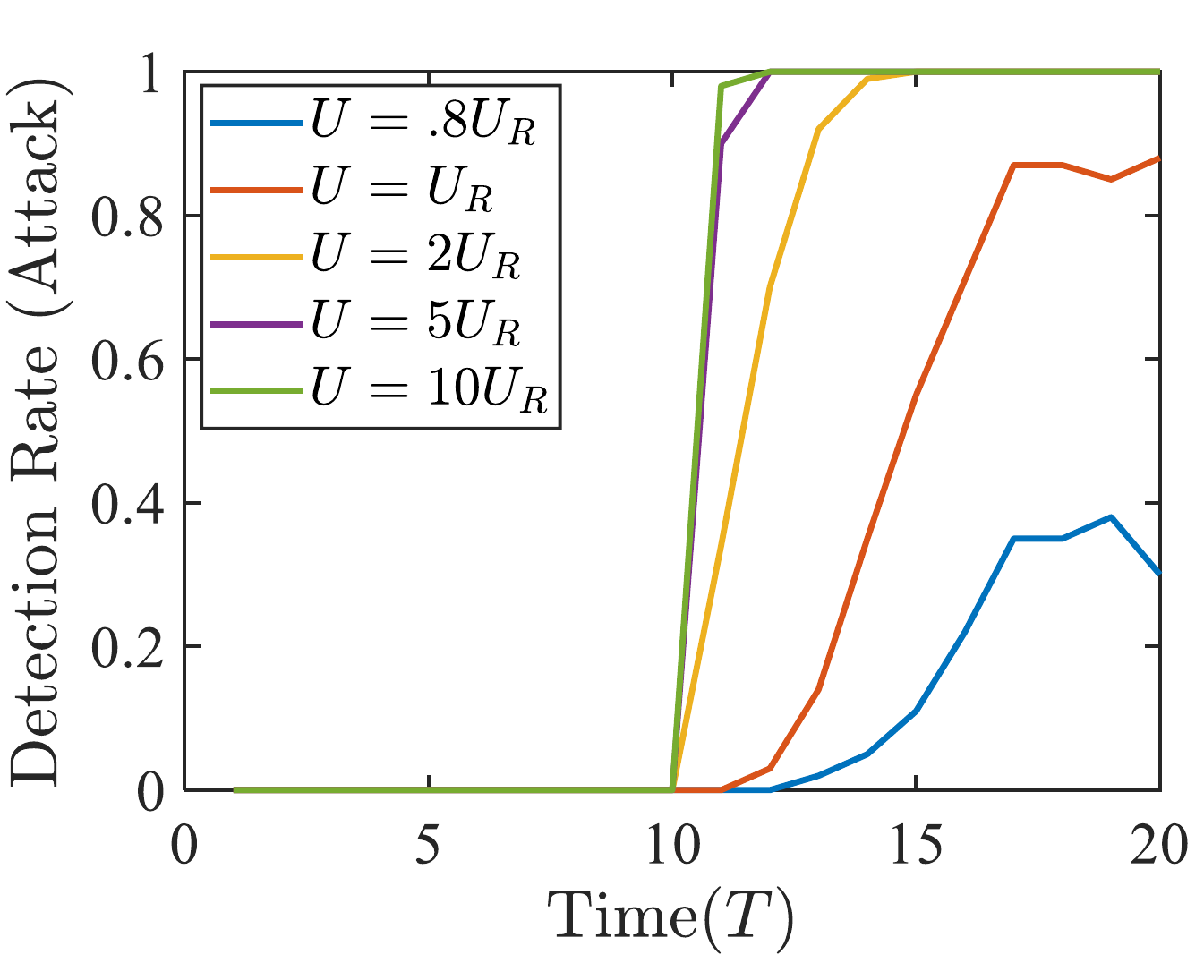} 
    \vspace{-2mm}
    \caption{Detection rate for different watermarking signal strength with the fixed $\mathcal{H}_2$ controller.}
    \label{f:H20}
\end{figure}
\vspace{-4mm}
\begin{table}[h!]
\caption{Table containing optimal signals for different subsections IV.A, IV.B and IV.C. }
\begin{tabular}{|l|l|l|l|l}
\cline{1-4}
 &        Watermarking - $U$                         &  $\|y\|_{\mathcal{H}_{2}}$    &  Detec.    &  \\ \cline{1-4}
$U=0$ & diag({[}0;0;0;0{]})             & 0.01 & 0.00    &  \\ \cline{1-4}
$U=U_{R}$ & diag({[}0.1;0.1;0.1;0.1{]})         & 111  & 3.12 &  \\ \cline{1-4}
$U=U_{Opt}$ & diag({[}.098;.187;.134;.094{]}) & 110  & 3.18 &  \\ \cline{1-4}
$U=U_{Opt+Ctrl}$ & diag({[}.097;.175;.092;.108{]}) & 111  & 3.36 &  \\ \cline{1-4}
\end{tabular}
\end{table}

\begin{figure}[ht!]
    \centering
    \includegraphics[width=0.45\textwidth]{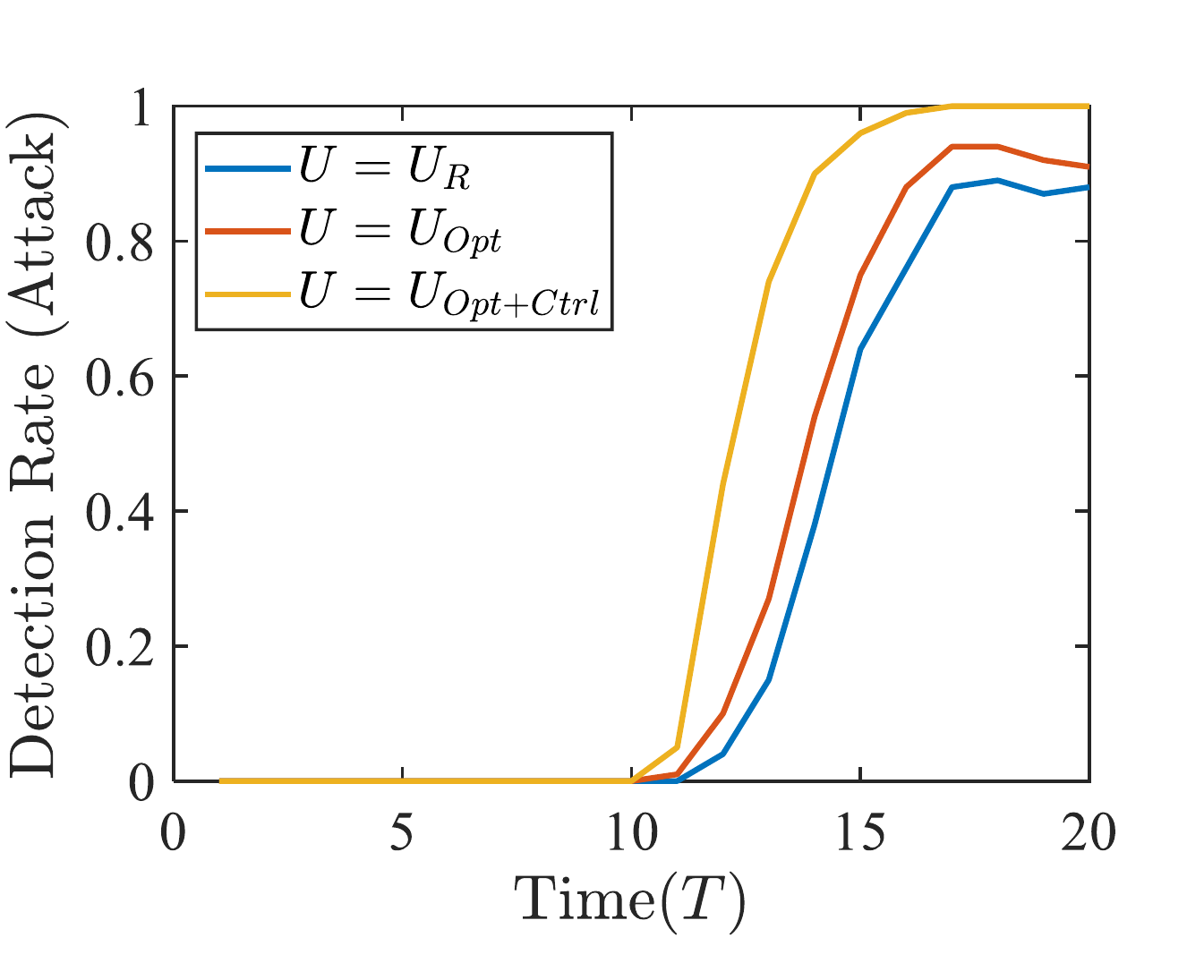} 
    \vspace{-2mm}
    \caption{Detection rate for three different cases mentioned in \S IV.A, IV.B and IV.C.}
    \label{f:OptH20}
\end{figure}

Table I gives the watermarking signal $U$ and the detection rate (i.e. $\operatorname{trace}\left(C^{T} \mathcal{X}^{-1} C \mathcal{U}\right)$) for three different cases mentioned in \S IV.A, IV.B and IV.C. Notice that the simulation results are designed to match the $\mathcal{H}_2$ of the output for all the three cases while maximizing the detection rate. The detection rate is zero and the performance is optimal in the absence of the watermarking signal $U=0$. A detection is observed for the randomly chosen watermarking signal $U=U_R$ with a significant loss in performance. The results show that the detection rate is improved with optimized watermarking signal $U_{Opt}$ with the same performance loss in the system, and it is improved further by co-optimizing the watermarking signal with the dynamic controller $U_{Opt+Ctrl}$.
Figure~\ref{f:OptH20} shows the detection rate for three different cases with reference watermarking signal $U_R$, optimized watermarking signal $U_{Opt}$, and co-optimized watermarking signal with controller $U_{Opt+Ctrl}$. Notice that the optimal pair of watermarking signal and controller results in faster and better detection of the replay attack.

% A dynamic controller is designed to minimize the $\|y\|_{\mathcal{H}_2}$ while maximizing the attack detection with the following constraints:
% $\mathbb{E}_{\infty}(y_i^2)$$=$$0.01$ m$^2$ for $i$$=$$1,2,3$ and $\mathbb{E}_{\infty}(y_i^2)$$=$$0.1$ m$^2$/s$^2$ for $i$$=$$4,5,6$. 
% $\bar{U}$$= $$1e2$N$^2$.

\section{Conclusion}\label{section: conclusion}
This paper developed a novel system-level design approach by the simultaneous selection of dynamic controller parameters and the architecture of watermarking signal to get the desired performance and detect the attack on the system. 
% The framework designs the closed-loop dynamics of the system which forms the basis of finding the optimal channels and intensity through which the watermarking signal should be added to detect replay attacks on the system. The problem is set as an optimization problem, where constraints are specified for the covariance of selected outputs with detection rate as the optimization function. 
% The design of the optimal watermarking signal for a given controller to maximize detection rate is solved using a convexifying approximation and the co-design of the controller and the watermarking signal to maximize detection rate led to a non-convex optimization problem the sub-optimal solution for which is found by iterating over two convex subproblems.
% The problem is not proven convex and the sub-optimal solution is found by iterating over constraint convex functions generated by adding a convexifying potential function.
% The sub-optimal solution is found by iterating over two convex constraint convex functions generated by adding a convexifying potential function.
% There is always a trade-off between legacy-compatibility and the effectiveness of security adds-on solutions. A security mechanism should be flexible, so it may be applied with different levels of alteration to the legacy system. 
We explored how a watermarking signal generator can be augmented to a $\mathcal H_2$ controlled system to detect reply attacks when there should be no change in the $\mathcal H_2$ legacy controller, and investigated how the simultaneous design/re-configuration of the $\mathcal H_2$ controller and watermarking signal would benefit the effectiveness of the watermarking signal. The design of the optimal watermarking signal for a given controller and detection rate, and the co-design of the controller and the watermarking signal for a given detection rate led to non-linear optimization problems which we attempt to approximately solve using iterative approaches. 
A follow-up work would be to simultaneously design the sensor and watermarking architecture to minimize the loss in performance due to added watermarking signal and still increase the detection rate of the attack.
\vspace{-1mm}
\bibliographystyle{IEEEtran}
    \vspace{-2mm}
\bibliography{Refs}
    \vspace{-2mm}
\vspace{-2mm}
\appendix
\vspace{-2mm}
\textbf{Proof for Theorem \ref{thm}:}
% Now, the goal is to calculate the term $\lim _{k \rightarrow \infty} \operatorname{Cov}\left(y_{k}^{\prime}-C \hat{x}_{k \mid k-1}\right)$. 
% During the attack mode, the attacker replays the previous $y_{k}$ to make the fake readings look normal. To analyze these kinds of attacks, 
Let us assume $y_{k}^{\prime}$ as the output of the following virtual system  \cite{mo2009secure}:
\begin{align}
x_{k+1}^{\prime} &=A x_{k}^{\prime}+B u_{k}^{\prime}+w_{k}^{\prime} + B \Delta u^{\prime}_k, \\
y_{k}^{\prime}& =C x_{k}^{\prime}+v_{k}^{\prime},
\\\hat{x}_{k+1 \mid k}^{\prime} &=A \hat{x}_{k \mid k}^{\prime}+B u_{k}^{\prime}+ B \Delta u^{\prime}_k,\\ \hat{x}_{k+1 \mid k+1}^{\prime} &=\hat{x}_{k+1 \mid k}^{\prime}+L\left(y_{k+1}^{\prime}-C\hat{x}_{k+1 \mid k}^{\prime}\right), 
% u_{k}^{\prime} &=L \hat{x}_{k \mid k}^{\prime}
\end{align}
with initial conditions $x_{0}^{\prime}$ and let the dynamic controller for the virtual system :
\begin{align}
x_{k+1}^{\prime,c} &= A_c x^{\prime,c}_{k} +B_c y^{\prime}_k, \label{e:Cntrl_dyn_vir} \\
u^{\prime}_k &= C_c x^{\prime,c}_{k} + D_c y^{\prime}_k.
\end{align}

Basically, to do a replay attack, we assume that the attacker runs the virtual system and records the sequence $y_{k}$ from time $t$. Then the virtual system is just a time shifted version of the real system, with $x_{k}^{\prime}=x_{t+k}, \hat{x}_{k \mid k}^{\prime}=\hat{x}_{t+k \mid t+k}$.

Now we rewrite the estimation of the Kalman filter $\hat{x}_{k \mid k-1}$ in the following recursive way to be able to detect the attack using the $\chi^{2}$ detector:
\begin{align*}
\hat{x}_{k+1 \mid k} &=A \hat{x}_{k \mid k}+B u_{k} + B \Delta u_k \\ &=A \hat{x}_{k \mid k}  + B C_c  x_k^c + B \Delta u_k,\\
\nonumber &=A\left[\hat{x}_{k \mid k-1}+L\left(y_{k}^{\prime}-C \hat{x}_{k \mid k-1}\right)\right] \\ &~~~~~~+ B C_c  (A_c x_{k-1}^c + B_c y_{k}^{\prime}) + B \Delta u_k,\\
\nonumber &=A(I-L C) \hat{x}_{k \mid k-1}+
B C_c  A_c x_{k-1}^c \\ &~~~~~~+ (B C_c B_c - A L) y_{k}^{\prime} + B \Delta u_k.
\end{align*}

It can be easily shown that the same equation holds true for the virtual system $\hat{x}_{k \mid k-1}^{\prime}$ :
\begin{align*}
\nonumber \hat{x}_{k+1 \mid k}^{\prime}&=A(I-L C) \hat{x}_{k \mid k-1}^{\prime}+
B C_c  A_c x_{k-1}^{\prime,c} \\ &~~~~~~~+ (B C_c B_c - A L) y_{k}^{\prime} + B \Delta u^{\prime}_k.
\end{align*} Subtracting the above two equations, define $\mathcal{A} \triangleq A(I-L C)$, $\mathcal{B} \triangleq B C_c  A_c,$ to get:
\begin{align*}
\hat{x}_{k+1 \mid k}-\hat{x}_{k+1 \mid k}^{\prime}&=\mathcal{A} \left(\hat{x}_{k \mid k-1}-\hat{x}_{k \mid k-1}^{\prime}\right) \\ & + \mathcal{B} (x_{k-1}^c- x_{k-1}^{\prime,c} ) +  B (\Delta u_k - \Delta u^{\prime}_k),
\end{align*}
which can also be written for $\hat{x}_{k \mid k-1}-\hat{x}_{k \mid k-1}^{\prime}$ as the summation of the terms from initial time-steps as:
\begin{align}
\nonumber \hat{x}_{k \mid k-1}-\hat{x}_{k \mid k-1}^{\prime}&=\mathcal{A}^{k}\left(\hat{x}_{0 \mid-1}-\hat{x}_{0 \mid-1}^{\prime}\right) \\
\nonumber + &~~~ \sum_{i=1}^{k} \mathcal{A}^{k-i} \mathcal{B} (x^c_{i-1}- x_{i-1}^{\prime,c} ) \\+ &~~~ \sum_{i=1}^{k} \mathcal{A}^{k-i}  B (\Delta u_i - \Delta u^{\prime}_i). \label{e:diff_real_vir}
\end{align}

Subtracting Eqn.~(\ref{e:Cntrl_dyn1}) and Eqn.~(\ref{e:Cntrl_dyn_vir}) for the case of replay attack with $y_k = y^{\prime}_k$ gives:
\begin{align}
%  $  
 x^{c}_{k+1} - x^{\prime,c}_{k+1} = A_c (x^{c}_{k} - x^{\prime,c}_{k}),
%  $
\end{align}
for which the solution can be written as:
\begin{align}
   x^{c}_{k-1} - x^{\prime,c}_{k-1} = A_c^{k-1} (x^{c}_{0} - x^{\prime,c}_{0}).
\end{align}
Defining $\hat{x}_{0 \mid-1}-\hat{x}_{0 \mid-1}^{\prime} \triangleq \zeta, x^{c}_{0} - x^{\prime,c}_{0} \triangleq \kappa$ and substituting the above equation into Eqn.~(\ref{e:diff_real_vir}) gives:
\begin{align}
\nonumber \hat{x}_{k \mid k-1}-\hat{x}_{k \mid k-1}^{\prime}&=\mathcal{A}^{k}\zeta+ \sum_{i=1}^{k} \mathcal{A}^{k-i} \mathcal{B} A_c^{i-1} \kappa \\+ &~~~ \sum_{i=1}^{k} \mathcal{A}^{k-i}  B (\Delta u_i - \Delta u^{\prime}_i). 
% \label{e:diff_real_vir}
\end{align}

Now write the residue as
\begin{align}
\nonumber y_{k}^{\prime}-&C \hat{x}_{k \mid k-1}  =\left(y_{k}^{\prime}-C \hat{x}_{k \mid k-1}^{\prime}\right) - C \mathcal{A}^{k} \zeta \\
\nonumber & -  C\sum_{i=1}^{k} \mathcal{A}^{k-i} \mathcal{B} A_c^{i-1} \kappa - C\sum_{i=1}^{k} \mathcal{A}^{k-i}  B (\Delta u_i - \Delta u^{\prime}_i). 
\end{align}

Notice that $y_{k}^{\prime}-C \hat{x}_{k \mid k-1}^{\prime}$ follows exactly the same distribution as $y_{k}-C \hat{x}_{k \mid k-1}$. Hence, if $\mathcal{A}$ is stable, the second term will converge to 0 and if $\mathcal{A}$ and $A_c$ are both stable, the third term will converge to 0. Thus, $y_{k}^{\prime}-C \hat{x}_{k \mid k-1}$ will converge to the same distribution as $y_{k}-C \hat{x}_{k \mid k-1}$ if there is no watermarking signal added to the system, hence, the detection rate for $\chi^{2}$ detector will be the same as false alarm rate, making the detector useless.

Now notice that $\Delta u_{i}$ is independent of the virtual system and for the virtual system, $y_{k}^{\prime}-C \hat{x}_{k \mid k-1}^{\prime}$ is independent of $\Delta u_{i}^{\prime}$. Hence
% \begin{align}
% \nonumber &\lim _{k \rightarrow \infty} \operatorname{Cov}\left(y_{k}^{\prime}-C \hat{x}_{k \mid k-1}\right)
% % = \lim _{k \rightarrow \infty} \operatorname{Cov}\left(y_{k}^{\prime}-C \hat{x}_{k \mid k-1}^{\prime}\right) \\&+\sum_{i=0}^{\infty} \operatorname{Cov}\left(C \mathcal{A}^{i} B \Delta u_{i}\right)+\sum_{i=0}^{\infty} \operatorname{Cov}\left(C \mathcal{A}^{i} B \Delta u_{i}^{\prime}\right), \\
% \\&~~~~~=\mathcal{X}+2 C \underbrace{\left[\sum_{i=0}^{\infty}  \mathcal{A}^{i} B U B^{T}\left(\mathcal{A}^{i}\right)^{T}\right]}_{\mathcal{U}} C^{T},
% \end{align}
\begin{align}
\lim _{k \rightarrow \infty} \operatorname{Cov}\left(y_{k}^{\prime}-C \hat{x}_{k \mid k-1}\right)=\mathcal{X}+2 C {\mathcal{U}} C^{T}, \label{eq:Cov}
\end{align}
where $\mathcal{U}$ is the solution of Lyapunov equation (\ref{eq:calU}). Finally, the expectation of $\chi^{2}$ detector $g_k$ is written as:
\begin{align}
\nonumber &\lim _{k \rightarrow \infty} E\left[\left(y_{k}^{\prime}-C \hat{x}_{k \mid k-1}\right)^{T} \mathcal{X}^{-1}\left(y_{k}^{\prime}-C \hat{x}_{k \mid k-1}\right)\right] \\
\nonumber &~~~~~~=\operatorname{trace}\left[\lim _{k \rightarrow \infty} \operatorname{Cov}\left(y_{k}^{\prime}-C \hat{x}_{k \mid k-1}\right) \times \mathcal{X}^{-1}\right], 
% \label{eq:expTrace}
% \\&=m+2 \operatorname{trace}\left(C^{T} \mathcal{X}^{-1} C \mathcal{U}\right).
\end{align}
which can be added up over the $T$ time window to prove the results.
% Eqn.~(\ref{eq:Cov}) can be substituted in Eqn.~(\ref{eq:expTrace})
% \begin{align}
% \mathcal{U}-B U B^{T}=\mathcal{A} \mathcal{U} \mathcal{A}^{T} . \label{}
% \end{align}
% Then by the definition of $\mathcal{U}$, it is easy to see that
% \begin{align}
% \mathcal{U}=\sum_{i=0}^{\infty} \mathcal{A}^{i} B U B^{T}\left(\mathcal{A}^{i}\right)^{T}
% \end{align}

% Proof. See the appendix.
% The $\chi^2$ detector can be integrated by summing up over the $T$ time window to prove the results.
\end{document}